\outer
\def\proclaim #1: #2\par{\medbreak
  \noindent{#1:\enspace}{\sl#2}\par
  \ifdim\lastskip<\bigskipamount \removelastskip\penalty55\medbreak\fi}
\newtheorem{theorem}{Theorem}
\newtheorem*{theorem*}{Theorem}
\newtheorem{definition}{Definition}
\newtheorem{example}{Example}
\newtheorem{proposition}{Proposition}
\newtheorem{remark}{Remark}
\newcounter{my}
\newcounter{my2}
\newcounter{my3}
\newcounter{my4}
\newcounter{my5}
\newcounter{my6}
\title{Mechanism Design in Max-Flows}
\author[1,2]{Shengyuan Huang$^\dag$}
\author[3]{Wenjun Mei}
\author[1,2]{Xiaoguang Yang}
\author[4]{\hspace{1cm}Zhigang Cao$^*$}
\affil[1]{Academy of Mathematics and Systems Science, Chinese Academy of Sciences}
\affil[2]{University of Chinese Academy of Sciences}
\affil[3]{Peking University}
\affil[4]{Beijing Jiaotong University}
\begin{document}

\maketitle
\renewcommand{\thefootnote}{*}
\footnotetext{Corresponding Author,  Email: \href{zgcao@bjtu.edu.cn}{zgcao@bjtu.edu.cn}.}

\begin{abstract}
This paper studies allocation mechanisms in max-flow games with players' capacities as private information. 
We first show that no core-selection mechanism is truthful: there may exist a player whose payoff increases if she under-reports her capacity when a core-section mechanism is adopted. 
We then introduce five desirable properties for mechanisms in max-flow games: DSIC (truthful reporting is a dominant strategy), SIR (individual rationality and positive payoff for each player contributing positively to at least one coalition), SP (no edge has an incentive to split into parallel edges), MP (no parallel edges have incentives to merge), and CM (a player's payoff does not decrease as another player's capacity and max-flow increase). While the Shapley value mechanism satisfies DSIC and SIR, it fails to meet SP, MP and CM. We propose a new mechanism based on minimal cuts that satisfies all five properties.
\end{abstract}

\section{Introduction}

Given a directed network with source nodes and sink nodes as well as edge capacities, the max-flow problem asks to  transport as much flow  from source nodes to sink nodes as possible subject to capacity constraints.
This problem is a classical  one in the theories of optimization and networks, with numerous applications \citep{Ford_Fulkerson_1956,shapley1961network,gale1981C/S,1982kalaibalanced,Kalai1982,candogan2021competitive,sargent2024economic}. 

Suppose now each edge is owned by a player, and players may cooperate to maximize the transported flow by using only the edges controlled by them. It can be observed that the most efficient outcome is when all players cooperate to obtain the maximum flow. An interesting question is how to fairly allocate the obtained revenue, assuming that each unit of transported flow brings a unit revenue. This revenue allocation problem in max-flow has been extensively studied in the literature,  from a cooperative game theory perspective mainly using the core and the nucleolus \citep{1982kalaibalanced,Kalai1982,deng2008algorithmic,Deng2009nucleolus,2023vvv,vvv2024}.

Most relevant papers assume that the capacities of the edges are all public information and that players contribute all their capacities when cooperating. However, neither of these above assumptions is satisfactory: edge capacities may well be private information, and even if they are public information, players may decide to contribute only part of their capacities when cooperating.

This paper studies revenue allocation in max-flow games with players' capacities as private information. We aim to design allocation mechanisms to solicit their true information, and then allocate the obtained total revenue fairly.

\subsection{Main results}
We first consider core allocations, because they have been extensively studied for the max-flow game.  
We find that no core-selection mechanism is truthful: there may exist a player whose payoff increases if she under-reports her capacity when a core-selection mechanism is adopted, even if all other players are truth-telling (Example \ref{eg:core1}).

We then introduce five desirable properties for mechanisms in max-flow games: DSIC (truthful reporting is a dominant strategy), SIR (individual rationality and a positive payoff for each player contributing positively to at least one coalition), SP (no edge has an incentive to split into parallel edges), MP (no parallel edges have incentives to merge), and CM (a player's payoff does not decrease as another player's capacity and max-flow increase). 

While the Shapley value mechanism satisfies DSIC and SIR (Proposition \ref{Monotonicity of Shapley value}), it fails to meet SP, MP, and CM  (Examples \ref{eg:sh1},\ref{eg:sh2},\ref{eg:sh3}). We also study CM of the Shapley value mechanism in more depth from the perspective of complements and substitutes explored by \cite{shapley1962complements} (Proposition \ref{shcom}).

We propose a new mechanism based on minimal cuts, referred to as the MC mechanism. After all players report their capacities, the MC mechanism proceeds in two steps.  First, for players who can transport some flow on their own, we allocate to them their stand-alone values. Second, we remove all players in the first step and allocate the remaining value equally among all minimal cuts. Then, for each minimal cut, every involved player receives a payoff that is proportional to her capacity. Our major result is that the MC mechanism satisfies all five properties (Theorem \ref{Monotonicity of MC}). One advantage of the  MC mechanism is that it uses minimal cuts instead of merely minimum cuts: where the  minimal cut is a purely structural concept, the minimum cut is not (it depends on the capacities of the edges).

We also study the CM property of the MC mechanism in more depth, and provide a more accurate and complete analysis of the impact on other players when the reported capacity of one player increases  (Theorem \ref{Cross-effect between edges}). A comparison of the major mechanisms discussed in this paper is provided in Table \ref{tab:your_label}.

\begin{table}[h]
\centering
\caption{A comparison of major mechanisms}
\label{tab:your_label} 
\begin{tabular}{cccccc}
\toprule
 & DSIC & SIR & SP & MP & CM \\
\midrule
Core &  × & × & \checkmark & \checkmark & ×\\
Shapley Value & \checkmark & \checkmark & × & × & × \\
MC mechanism& \checkmark & \checkmark & \checkmark & \checkmark & \checkmark\\
\bottomrule
\end{tabular}
\end{table}

\subsection{Literature review}
\cite{1982kalaibalanced,Kalai1982} were the first to formally define max-flow games from the perspective of cooperative game theory. They demonstrated that the core of a flow game has a relatively explicit structure and established that flow games are equivalent to totally balanced games, implying that every subgame of a flow game has a non-empty core. The more general multi-commodity flow games were considered in \cite{derks1985stable}.  Most follow-up studies on max-flow games  focused on the computation and refinement of the core and the nucleolus. For instance, \cite{potters2006nucleolus} presented an algorithm for computing the nucleolus of simple flow games in a general setting, \cite{Deng2009nucleolus} showed that the nucleolus of a simple flow game can be efficiently computed in polynomial time using a linear programming duality approach, \cite{Kern2009morflow} developed algorithms for computing the core and f-nucleon of simple flow games.  

Regarding core refinement, \cite{2023vvv} employed the dual linear program of the max-flow linear program to partially characterize the core,  \cite{vvv2024} introduced the Dual-Consistent Core (DCC) as a refinement of the core,  offering improved fairness and computational efficiency compared to the core.

While edges in the max-flow problem are viewed  players in the aforementioned literature (including this paper), there are also models that view nodes as players.  Investigating Internet traffic flows using  cooperative game theory is also one of the earliest models in the community of algorithmic game theory, where each node of the network is a server and is viewed as a player  in the model \citep{papadimitriou2001algorithms,markakis2003core}. \cite{mor2024} contributed to understanding cooperative behaviors in decentralized flow networks by exploring flow allocation games, where players controlling nodes allocate flows strategically. They analyzed the computational complexity of equilibrium solutions, characterized core allocations, and proposed algorithms to ensure stability in flow distribution.

To the best of our knowledge, \cite{archer2001truthful} were the first to consider mechanism design problems in the context of max-flow, as an application of the general theory they studied with a single private parameter for each player. In the setting of max-flows,  \cite{archer2001truthful} assumed that the single private parameter of each player is (eventually) the edge capacity, as in this paper. However, their payoff setting differs significantly from ours: they used the inverse of the edge capacity to define the unit transportation cost, and the payoff for each player is the allocated payoff minus the total cost (the inverse of this edge capacity times the flow amount). Moreover, they also assumed that the transferred payoff  and the allocated flow of each player (given other players' reported parameters) are twice differentiable in the player's reported parameter, which is not the case in our paper (they are not even differentiable). 
Mechanism design problems related to max-flows are also considered in \cite{lavi2011truthful} and \cite{liu2014mechanism}, where players are neither edges nor nodes, but source-sink pairs (thus their models are closer to the congestion game). 
All mentioned papers concentrated on truthfulness of the mechanism as well as computational issues, whereas we do not focus on computation but on more desirable properties of the mechanism. To the best of our knowledge, we are the first to consider SIR, SP, MP, and CM in the setting of max-flows.
\cite{agarwal2008mechanism} also considered mechanism design in max-flows problems and each player in their model may possess capacities of multiple edges. However, they did not consider truthfulness issues and only core allocations were studied.  

\section{Preliminaries}

\subsection{Model}





    \paragraph{Max-flow problem.} We are given an acyclic directed graph \(G=(V,E,c)\), where $V$ is the set of nodes, $E$ is the set of directed edges, and $c\in \mathbb{R}_{++}^E$ is the vector of capacities, with $c(e_i)=c_i>0$ being the capacity of each edge $e_i\in E$.
    A node is a source of \(G\) if its in-degree is 0 and out-degree is positive, and a sink if its out-degree is 0 and in-degree is positive.  We use \(V_s\) to denote the set of sources, and \(V_t\) the set of sinks. We assume  (except for the extension model) that $G$ has a single source denoted $s$ and a single sink denoted $t$, and assume that each edge is on at least one $s$-$t$  path.
    
    A flow is a map \(f : E \to \mathbb{R}_{+}\) that satisfies (i) Capacity constraint: \( 0\leq f(e) \leq c(e), \forall e \in E\), and (ii) Conservation of flows: \(\sum_{e\in \delta^-(v)} f(e) = \sum_{e\in \delta^+(v)} f(e),\forall v \in V \setminus (V_s\cup V_t),\)  where \(\delta^-(v)\) (resp. \(\delta^+(v))\) denotes the set of out-edges (resp. in-edges)  of $v$.
The max-flow problem is to find a flow $f$ that maximizes \(\sum_{v \in V_s}\sum_{e\in \delta^-(v)} f(e)\). We use $F(c)$ to denote the max-flow value when the capacities are $c$.

 \paragraph{Max-flow game.} Suppose that one unit of profit is earned for each unit of flow, and hence the flow value can be viewed  as the total payoff.
  Suppose also that every edge $e_i\in E$ is possessed by a player $i$, which may also be denoted $e$. Then each max-flow problem is associated with a cooperative game \((N,v)\), where \(N=E\) is the player set,  and \(v: 2^N \rightarrow \mathbb{R}\) is the characteristic function such that for any coalition \(S \subseteq N\), \(v(S)\) is the max-flow value using edges in $S$. We often suppose that player \(i \in N\) owns \(e_i \in E\), we denote the capacity \(c(e_i)\) by \(c_i\) for simplicity.

For each vector \(x\in \mathbb{R}^N\), we denote \(x(S) =\sum_{i\in S}x(i)\). The core of cooperative game \((N,v)\) is a set of efficient allocations such that no group of players have incentives to deviate, i.e., 
\(C(N,v)=\{x\in \mathbb{R}^N| ~x(S)\geq v(S), \forall S\subset N; x(N)=v(N) \}.\)
The max-flow game has been extensively studied, and it is known that the max-flow game has a nonempty core \citep{1982kalaibalanced}.



\paragraph{Mechanism design.} The model of the max-flow game assumes that all the information of the games, especially the capacities of the edges, are known to the social planner. However, this assumption is often unrealistic.
We suppose throughout the paper that each capacity $c(e)$ is private information.

Since $c(e)$ is private information, player $i$ may have incentive to misreport it. Let $\hat c(e)$ be the reported capacity of $e$, we assume that  $\hat c(e)\leq c(e)$. That is, each player can under-report her capacity, but cannot over-report. This is an assumption that is frequently made in the literature. Denote $(N,\hat v)$ by the max-flow game induced from the reporting capacity function $\hat c$.
 
Roughly speaking, a mechanism of a max-flow problem is a payoff allocation based on reported capacities. 
We denote a mechanism as a map $\phi$ from reported capacity vectors \(\hat c\in \mathbb{R}^n_{++}\) to an allocation vector \(\phi(\hat c)=(\phi_1(\hat c), \phi_2(\hat c)),..., \phi_n(\hat c))\), such that $\sum_{i=1}^n \phi_i(\hat c) =\hat v(N)$. That is, we assume that each mechanism for the max-flow problem is efficient.

Besides payoff allocation, we also assume that the designer selects a set of paths to transport the maximum flow. Since there is no cost incurred when the edge of a player is used,  and generally the edge is useless to any single player when it is not selected, the problem of path selection is unimportant and will not be explicitly addressed. 

The first standard desirable property of a mechanism is that no player has an incentive to lie. There are two versions of this property, depending on whether Nash equilibrium  or dominant strategy equilibrium is applied.

\begin{itemize}
\item[-] {\bf Nash Equilibrium Incentive Compatibility (NEIC):}
A mechanism  \(\phi:\mathbb{R}_{++}^n\rightarrow \mathbb{R}^n\) is NEIC, if truthful reporting by all players forms a Nash equilibrium. That is, for each player $i\in N$: 
\(\phi_i(c_i,c_{-i}) \geq \phi_i(\hat c_i,c_{-i}), \forall \hat c_i\in [0,c_i].\)
\item[-] {\bf Dominant Strategy Incentive Compatibility (DSIC):}
A mechanism  \(\phi:\mathbb{R}_{++}^n\rightarrow \mathbb{R}^n\) is DSIC, if truthful reporting is a dominant strategy for every player. That is, for each player $i\in N$: 
\(\phi_i(c_i,\hat c_{-i}) \geq \phi_i(\hat c_i,\hat c_{-i}), \forall \hat c_i\in [0,c_i], \hat c_{-i} \in [0,c_{-i}].\)
\end{itemize}

The second standard desirable property of a mechanism is that players have incentives to participate in the game, i.e., the allocated payoff is no less than her stand-alone value. This property is known as individual rationality (IR). We shall apply a stronger version of IR, requiring both IR and that each player generally receives a positive payoff (i.e., each player has an absolute incentive to participate in it).  

\begin{itemize}
\item[-] {\bf Strong Individual Rationality (SIR)}: 
A mechanism  \(\phi:\mathbb{R}_{++}^n\rightarrow \mathbb{R}^n\) is SIR, if  (i) for each player $i\in N$,
\(\phi_i(\hat c) \geq \hat v_i=\hat v(e_i)\), and (ii) for each player $i$ if there exists at least one coalition \(S \subset E\) meeting \(\hat v(S)-\hat v(S\setminus \{e_i\})>0\), we have
\(\phi_i(\hat c) >0.\) Equivalently, a mechanism is SIR, if it is IR, and allocates a positive payoff to any player whose edge is on at least one source-terminal path.
\end{itemize}

\subsection{Core-selection mechanisms}

Since the core of the max-flow game has been extensively studied, and core-selection is a desirable property,  it is natural to ask whether there exists any core-selection mechanism that is DSIC. The example below provides a negative answer, even to NEIC.

\begin{example}[Core violates NEIC and SIR] \label{eg:core1}Consider any core-selection mechanism $\phi(\hat c)$ and a four-player game in Figure \ref{counterexample}. When $\hat c=(\hat c_1, \hat c_2,\hat c_3,\hat c_4)=(1,1,1,1)$, since $\sum_{i=1}^n \phi_i(\hat c) = \hat v(N)=2$, there must exist one player with a positive payoff. By symmetry, we assume that player 1 has $\phi_1(\hat c)>0$ without loss of generality. 

When $c=(c_1,c_2,c_3,c_4)=(2,1,1,1)$, suppose that players 2,3, and 4 report their true capacities of 1. If player 1 reports a true capacity of 2, the unique core allocation will be $\phi(\hat c')=(\phi_1(\hat c'),\phi_2(\hat c'),\phi_3(\hat c'),\phi_4(\hat c'))=(0,0,1,1)$, in which case the profit allocated to player 1 is 0. This implies that the core violates SIR. Furthermore, if player 1 strategically under-reports a capacity as 1, then player 1's core allocation will be $\phi_1(1,1,1,1)>0$. Therefore, player 1 has an incentive to lie, and the mechanism cannot be NEIC. 
\end{example}

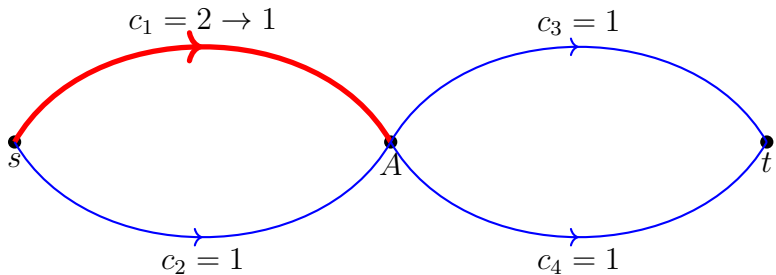
\begin{figure}[H]
 \centering
\begin{tikzpicture}[scale=5]

    \fill (1,0) circle (0.5pt) node[below] {$s$};
    \fill (2,0) circle (0.5pt) node[below] {$A$};
    \fill (3,0) circle (0.5pt) node[below] {$t$};

  \draw[red, line width=2pt, postaction={decorate,decoration={markings,mark=at position 0.5 with {\arrow{>[scale=1]}}}}] (1,0) to[out=60, in=120] node[above] {\textcolor{black}{$c_1=2 \rightarrow 1$}} (2,0);
 \draw[blue, line width=0.8pt, postaction={decorate,decoration={markings,mark=at position 0.5 with {\arrow{>[scale=1]}}}}](1,0) to[out=-60, in=-120] node[below] {\textcolor{black}{$c_2=1$}} (2,0);
  \draw[blue, line width=0.8pt, postaction={decorate,decoration={markings,mark=at position 0.5 with {\arrow{>[scale=1]}}}}] (2,0) to[out=60, in=120] node[above] {\textcolor{black}{$c_3=1$}} (3,0);
   \draw[blue, line width=0.8pt, postaction={decorate,decoration={markings,mark=at position 0.5 with {\arrow{>[scale=1]}}}}] (2,0) to[out=-60, in=-120] node[below] {\textcolor{black}{$c_4=1$}} (3,0);
\end{tikzpicture}

\caption{Core violates NEIC and SIR}
\label{counterexample}
\end{figure}

\begin{remark}This example could be viewed as a Braess-like Paradox. Suppose now some core-selection mechanism is applied. Then it may well be the case that a player's payoff decreases when her capacity increases. Note also the difference between this example and the typical Braess Paradox: while the latter focuses on the overall congestion time, this example focuses on the individual payoff. Regarding network structures, this example is even stronger than the Braess Paradox, because it is known that the Braess Paradox does not happen for series-parallel networks \citep{milchtaich2006network,chen2016network,Kern2009morflow}. However, series-parallel networks indeed have additional desirable properties in our setting, as will be seen later. 
\end{remark}

\begin{remark}The nucleolus is another solution concept for max-flow game that has been extensively studied \citep{potters2006nucleolus,Deng2009nucleolus}. Since the nucleolus is a subset of the core whenever the latter is nonempty, and the core of a max-flow game is non-empty,  we know that there does not exist a nucleolus-selection mechanism that is NEIC for all max-flow games.\end{remark}

\section{The Shapley Value Mechanism}

We denote the reported capacity of player $i\in N$ by $\hat c_i \in [0, c_i]$, which leads to a new max-flow game \((N,\hat v)\). The Shapley value mechanism allocates to each player $i$ its Shapley value in \((N,\hat v)\) as the payoff, which is the expected marginal contribution of \(i\), i.e.,
\[Sh_i(\hat c) = \sum_{i\in S, S \subseteq N} \frac{(|S|-1)! (|N| - |S|)!}{|N|!}\left( \hat v(S) - \hat v(S \setminus \{i\} ) \right).\]

\begin{proposition}
The Shapley value mechanism is DSIC and SIR for max-flow games.
\label{Monotonicity of Shapley value}
\end{proposition}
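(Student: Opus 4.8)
The plan is to reduce both claims to two structural properties of the max-flow characteristic function $\hat v$---\emph{monotonicity} (enlarging a coalition, or raising a single capacity, cannot decrease the optimal flow)---and \emph{superadditivity} (for disjoint $S,T$ we have $\hat v(S\cup T)\ge \hat v(S)+\hat v(T)$), combined with the fact that $Sh_i(\hat c)$ is a convex combination of the marginal contributions $\hat v(S)-\hat v(S\setminus\{i\})$ over coalitions $S\ni i$, with strictly positive weights $\frac{(|S|-1)!(|N|-|S|)!}{|N|!}$ that sum to one. Both structural properties are elementary: monotonicity follows by extending any feasible flow by zero on the newly available edges, and superadditivity follows because the sum of a max-flow on $S$ and a max-flow on $T$ (disjoint edge sets) is feasible on $S\cup T$ and respects both the capacity and conservation constraints.

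For DSIC, I would fix a player $i$ and an arbitrary profile $\hat c_{-i}$ of the others, and show that $Sh_i(\hat c_i,\hat c_{-i})$ is non-decreasing in $\hat c_i$. The key observation is that the report $\hat c_i$ enters the Shapley formula only through the terms $\hat v(S)$ with $i\in S$, while every $\hat v(S\setminus\{i\})$ is independent of $\hat c_i$. Since raising the single capacity $\hat c_i$ only relaxes one constraint of the max-flow program defining $\hat v(S)$, each such $\hat v(S)$ is non-decreasing in $\hat c_i$, and hence so is every marginal-contribution term $\hat v(S)-\hat v(S\setminus\{i\})$. Taking the positively-weighted sum preserves monotonicity, so $Sh_i$ is non-decreasing in its owner's report. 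Because a player can only under-report ($\hat c_i\le c_i$), reporting the true capacity $c_i$ maximizes her payoff regardless of $\hat c_{-i}$, which is exactly DSIC.

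For SIR, part (i) follows from superadditivity applied with $T=\{e_i\}$: for every $S\ni i$ we obtain $\hat v(S)-\hat v(S\setminus\{i\})\ge \hat v(\{e_i\})=\hat v_i$, so the convex combination defining $Sh_i$ is itself at least $\hat v_i$. For part (ii), monotonicity gives $\hat v(S)-\hat v(S\setminus\{i\})\ge 0$ for every $S\ni i$; if some coalition $S_0$ satisfies $\hat v(S_0)-\hat v(S_0\setminus\{e_i\})>0$, then that one term carries a strictly positive Shapley weight while all remaining terms are non-negative, forcing $Sh_i(\hat c)>0$. The equivalent ``on a source-terminal path'' formulation is recovered by taking $S_0$ to be the edge set of such a path $P$: pushing flow along $P$ gives $\hat v(S_0)=\min_{e\in P}\hat c(e)>0$, whereas deleting $e_i$ disconnects $s$ from $t$ within $S_0$, so $\hat v(S_0\setminus\{e_i\})=0$.

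The genuinely load-bearing steps are the two facts about $\hat v$, so that is where I would spend care: verifying that the superposition of two flows on disjoint edge sets still satisfies conservation at every interior node (and exceeds no capacity, since each edge is used by at most one of the two flows), and that increasing one edge capacity cannot lower the optimal value. Once these are pinned down, the DSIC monotonicity and the two SIR inequalities drop out mechanically from the positivity of the Shapley weights; notably, no differentiability or even continuity of $\hat v$ in $\hat c$ is required, only these order-theoretic properties.
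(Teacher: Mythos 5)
Your proof is correct, and most of it coincides with the paper's: the DSIC argument (each $\hat v(S)$ with $S\ni i$ is weakly increasing in $\hat c_i$, each $\hat v(S\setminus\{i\})$ is unaffected, positive Shapley weights preserve monotonicity, and only under-reporting is allowed) and the SIR positivity argument (one strictly positive marginal contribution with strictly positive weight, all others nonnegative) are exactly the paper's. Where you genuinely diverge is the IR step. The paper argues by cases: if $\hat v(\{e_i\})=0$ nothing needs proving since marginal contributions are nonnegative, and if $\hat v(\{e_i\})>0$ then $e_i$ is an $s$-$t$ edge, whence $\hat v(S)-\hat v(S\setminus\{i\})=\hat v(\{e_i\})$ for \emph{every} $S\ni i$, pinning down $Sh_i(\hat c)=\hat v(\{e_i\})$ exactly. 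You instead invoke superadditivity of the flow game (superposition of max-flows on disjoint edge sets is feasible), which gives $\hat v(S)-\hat v(S\setminus\{i\})\ge \hat v(\{e_i\})$ uniformly and hence $Sh_i(\hat c)\ge \hat v_i$ in one stroke, with no case analysis. Your route is more general---it proves IR of the Shapley value for any superadditive game, using nothing about max-flow beyond flow superposition---while the paper's case analysis buys a sharper conclusion for $s$-$t$ edges, namely the exact payoff $Sh_i(\hat c)=\hat v(\{e_i\})$, which incidentally matches what the MC mechanism later allocates in its first step. One small caveat in your final remark: deriving $\hat v(S_0)=\min_{e\in P}\hat c(e)>0$ for a path $P\ni e_i$ requires all reported capacities along $P$ to be strictly positive; this is consistent with the paper's stated domain $\mathbb{R}^n_{++}$, but since reports are elsewhere allowed in $[0,c_i]$, the equivalence with the ``on a source--terminal path'' formulation silently uses that positivity.
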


\begin{proof}
Consider any player \(i\) owning edge \(e_i\).  Since \(\hat v(S)\) is weakly increasing  in \(\hat c_i\) for any $S$ containing $e_i$, it is evident that the Shapley value mechanism is DSIC. Since \(\hat v(S)-\hat v(S\setminus i)\) is always nonnegative, to show that the Shapley value mechanism is IR, we only need to consider the case that $\hat v(\{i\})>0$. In this case, $e_i$ must be an $s-t$ edge, and hence  \(\hat v(S)-\hat v(S\setminus i)=\hat v(\{i\})\) for all $S\ni i$, implying that \(Sh_i(\hat c)=\hat v(\{i\})\). Thus the Shapley value mechanism is IR. To show SIR, suppose now there exists a coalition \(S\) such that \(v(S)-v(S\setminus i)>0\). By definition, we immediately have \(Sh_i(v)>0\), proving that the Shapley value mechanism is SIR.
\end{proof}

\begin{remark}The above simple property regarding DSIC could be better understood from a general property of the Shapley value. A solution concept $\phi$ is called strongly monotonic, if $\phi_i(v)\leq \phi_i(v')$ holds for any two cooperative games $(N,v)$ and $(N,v')$ and a player $i\in N$ such that $v_i(S)-v_i(S\setminus \{i\})\leq v'_i(S)-v'_i(S\setminus \{i\})$ for all $S\subseteq N, i\in S$. For the class of all cooperative games, it is known that the Shapley value is the unique symmetric and efficient solution concept that is strongly monotonic \citep{Young1985}.  For general mechanism design problems, truth-telling is also closely related to monotonicity \citep{ashlagi2010monotonicity}.
\end{remark}

\subsection{Violations of more desirable properties}

Beyond DSIC and SIR, additional desirable properties can be conceived for mechanisms designed to tackle the payoff-allocation  problem in  max-flow.


\begin{itemize}
\item[-] {\bf Split-Proofness (SP):}
A mechanism  \(\phi:\mathbb{R}_{++}^{|E|} \rightarrow \mathbb{R}^{|E|}\) is SP, if any edge has no incentive to split. That is, for any edge \(e_i\), it splits into two basic parallel edges \(e_{i1}, e_{i2}\) (Figure \ref{S-I}) with $ c_{i}=c_{i1}'+ c_{i2}'$ and $\hat c_{i}=\hat c_{i1}'+\hat c_{i2}', \forall c'_{i1}, c'_{i2},\hat c'_{i1}, \hat c'_{i2} \in R_{+}$, we always have \(\phi_i(\hat c) \geq \phi_{i1}(\hat c')+\phi_{i2}(\hat c')\).

\item[-]{\bf Merge-Proofness (MP):}
A mechanism  \(\phi:\mathbb{R}_{++}^{|E|} \rightarrow \mathbb{R}^{|E|}\) is MP, if any two basic parallel edges have no incentive to merge. That is, for any two basic parallel edges \(e_i, e_j \in E\) (Figure \ref{S-I}), they merge into an edge $e_{i,j}$ with $ c_{i}+ c_{j}=c_{i,j}'$ and $ \hat c_{i}+\hat c_{j}=\hat c_{i,j}'$, we have \(\phi_{i}(\hat c)+\phi_{j}(\hat c) \geq \phi_{i,j}(\hat c') \).

\item[-]{\bf Cross Monotonicity (CM)}: 
A mechanism  \(\phi:\mathbb{R}_{++}^{|E|} \rightarrow \mathbb{R}^{|E|}\) is CM, if an increase in the reported capacity of an edge leads to a strict increase of the maxi-flow value and a weak increase of the payoffs of all other edges. That is, for any edge $e_i$, if its reported capacity $\hat c_i$ strictly increases to $\hat c_i'$, we have \(\phi_{j}(\hat c)\leq \phi_{j}(\hat c'), \forall j \neq i\).

\end{itemize}

\begin{example}[Shapley value violates SP] \label{eg:sh1}
Consider the 6-player game described in  Figure \ref{The Shapley value is not SP}(a), where player 1 has a capacity of 2 and all other players have capacities of 1. When each player truthfully reports her capacity, it can be verified that $Sh_1(\hat c)$ is \(\frac{1}{30}\). Suppose now player 1 splits her edge into two parallel ones with unit capacity, as described in Figure \ref{The Shapley value is not SP}(b). 
It can be verified that each of the new players has a Shapley value of \(\frac{1}{42}\). Since  \(\frac{2}{42}>\frac{1}{30}\), player 1 does have an incentive to do this, demonstrating that the Shapley value violates SP.

\end{example}

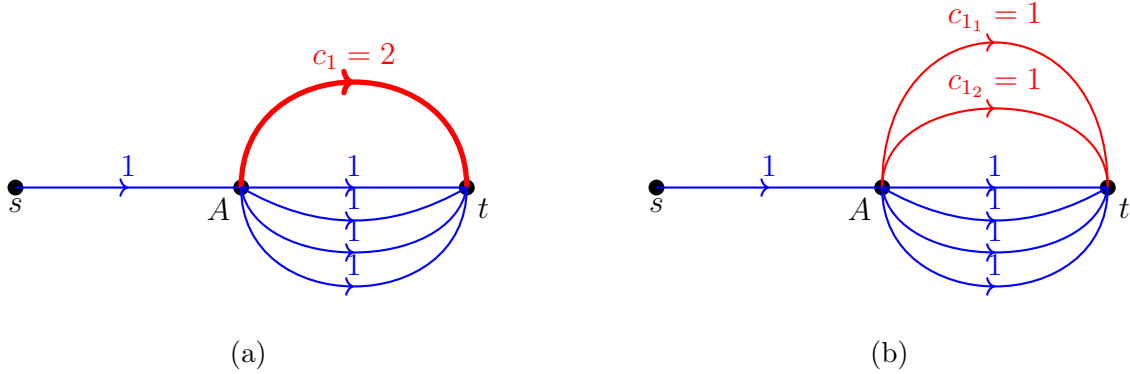
\begin{figure}[H] 

\begin{subfigure}[b]{0.45\textwidth}
\centering
\begin{tikzpicture}[scale=3]
      \fill (1,0) circle (1pt) node[below] {$s$};
   \fill (2,0) circle (1pt) node[below left] {$A$};
   \fill (3,0) circle (1pt) node[below right] {$t$};

  \draw[blue, line width=0.8pt, postaction={decorate,decoration={markings,mark=at position 0.5 with {\arrow{>[scale=1]}}}}] (1,0) to[out=0, in=180] node[above] {$1$} (2,0);
  \draw[red, line width=2pt, postaction={decorate,decoration={markings,mark=at position 0.5 with {\arrow{>[scale=0.8]}}}}] (2,0) to[out=90, in=90, looseness=1.6] node[above] {$c_{1}=2$} (3,0);
  
  \draw[blue,line width=0.8pt, postaction={decorate,decoration={markings,mark=at position 0.5 with {\arrow{>[scale=1]}}}}] (2,0) to[out=-0, in=-180] node[above] {$1$} (3,0);
  \draw[blue,line width=0.8pt, postaction={decorate,decoration={markings,mark=at position 0.5 with {\arrow{>[scale=1]}}}}] (2,0) to[out=-30, in=-150] node[above] {$1$} (3,0);
  \draw[blue,line width=0.8pt, postaction={decorate,decoration={markings,mark=at position 0.5 with {\arrow{>[scale=1]}}}}] (2,0) to[out=-80, in=-100] node[above] {$1$} (3,0);
  \draw[blue,line width=0.8pt,postaction={decorate,decoration={markings,mark=at position 0.5 with {\arrow{>[scale=1]}}}}] (2,0) to[out=-90, in=-90, looseness=1.5] node[above] {$1$} (3,0);
\end{tikzpicture}
\caption{}
\label{Edge merging2}
\end{subfigure}
\hfill
\begin{subfigure}[b]{0.45\textwidth}

\centering

\begin{tikzpicture}[scale=3]

   \fill (1,0) circle (1pt) node[below] {$s$};
   \fill (2,0) circle (1pt) node[below left] {$A$};
   \fill (3,0) circle (1pt) node[below right] {$t$};

  \draw[blue,line width=0.8pt,postaction={decorate,decoration={markings,mark=at position 0.5 with {\arrow{>[scale=1]}}}}] (1,0) to[out=0, in=180] node[above] {$1$} (2,0);
  \draw[red,line width=0.8pt,postaction={decorate,decoration={markings,mark=at position 0.5 with {\arrow{>[scale=1]}}}}] (2,0) to[out=90, in=90, looseness=2.2] node[above] {$c_{1_1}=1$} (3,0);
  \draw[red,line width=0.8pt,postaction={decorate,decoration={markings,mark=at position 0.5 with {\arrow{>[scale=1]}}}}] (2,0) to[out=90, in=90, looseness=1.2] node[above] {$c_{1_2}=1$} (3,0);
  
  \draw[blue,line width=0.8pt,postaction={decorate,decoration={markings,mark=at position 0.5 with {\arrow{>[scale=1]}}}}] (2,0) to[out=-0, in=-180] node[above] {$1$} (3,0);
  \draw[blue,line width=0.8pt,postaction={decorate,decoration={markings,mark=at position 0.5 with {\arrow{>[scale=1]}}}}] (2,0) to[out=-30, in=-150] node[above] {$1$} (3,0);
  \draw[blue,line width=0.8pt,postaction={decorate,decoration={markings,mark=at position 0.5 with {\arrow{>[scale=1]}}}}] (2,0) to[out=-80, in=-100] node[above] {$1$} (3,0);
  \draw[blue,line width=0.8pt,postaction={decorate,decoration={markings,mark=at position 0.5 with {\arrow{>[scale=1]}}}}] (2,0) to[out=-90, in=-90, looseness=1.5] node[above] {$1$} (3,0);
\end{tikzpicture}
\caption{}
\label{Edge splitting2}
\end{subfigure}

\caption{Shapley value violates SP}
\label{The Shapley value is not SP}
\end{figure}

\begin{example}[Shapley value violates MP] \label{eg:sh2} Consider the 3-player game described in Figure \ref{Non-satisfaction of anti-sibil attack by the Shapley value}(a) with all unit capacities. When each player truthfully reports her capacity, it can be verified that  $Sh_1(\hat c), Sh_2(\hat c)$ are both \(\frac{1}{6}\). Suppose now the two players merge into a new one with capacity 2, as described in Figure \ref{Non-satisfaction of anti-sibil attack by the Shapley value}(b). It can be verified that  the Shapley value of the new player is \(\frac{1}{2}\) . Since  $\frac{2}{6}<\frac{1}{2}$, the two players do have incentives to do this, demonstrating that the Shapley value violates MP.\end{example}

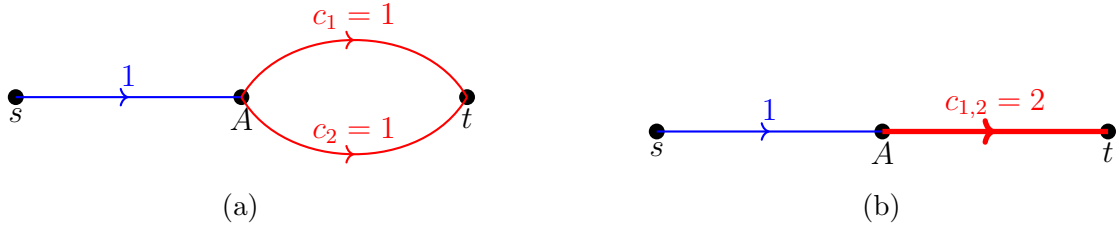
\begin{figure}[h] 
\begin{subfigure}[b]{0.45\textwidth}
\centering
\begin{tikzpicture}[scale=3]

   \fill (1,0) circle (1pt) node[below] {$s$};
   \fill (2,0) circle (1pt) node[below] {$A$};
   \fill (3,0) circle (1pt) node[below] {$t$};

  \draw[blue,line width=0.8pt, postaction={decorate,decoration={markings,mark=at position 0.5 with {\arrow{>[scale=1]}}}}] (1,0) to[out=0, in=180] node[above] {$1$} (2,0);
  \draw[red,line width=0.8pt, postaction={decorate,decoration={markings,mark=at position 0.5 with {\arrow{>[scale=1]}}}}] (2,0) to[out=60, in=120] node[above] {$c_1=1$} (3,0);
  \draw[red,line width=0.8pt, postaction={decorate,decoration={markings,mark=at position 0.5 with {\arrow{>[scale=1]}}}}] (2,0) to[out=-60, in=-120] node[above] {$c_2=1$} (3,0);
\end{tikzpicture}
\caption{}
\label{Edge splitting}
\end{subfigure}
\hfill
\begin{subfigure}[b]{0.45\textwidth}
\centering
\begin{tikzpicture}[scale=3]
   \fill (1,0) circle (1pt) node[below] {$s$};
   \fill (2,0) circle (1pt) node[below] {$A$};
   \fill (3,0) circle (1pt) node[below] {$t$};
  \draw[blue,line width=0.8pt, postaction={decorate,decoration={markings,mark=at position 0.5 with {\arrow{>[scale=1]}}}}] (1,0) to[out=0, in=180] node[above] {$1$} (2,0);
  \draw[red, line width=2pt, postaction={decorate,decoration={markings,mark=at position 0.5 with {\arrow{>[scale=0.8]}}}}] (2,0) to[out=0, in=180] node[above] {$ c_{1,2}=2$} (3,0);
\end{tikzpicture}
\caption{}
\label{Edge merging}
\end{subfigure}

\caption{Shapley value violates MP}
\label{Non-satisfaction of anti-sibil attack by the Shapley value}
\end{figure}

\begin{example}[Shapley value violates CM] \label{eg:sh3} Consider the 4-player game described in Figure \ref{counterexample2}, where both left players have capacities of 0.5 and right players have capacities of 1. When each player truthfully reports her capacity, it can be verified that the max-flow value is 1.5, and $Sh_2(\hat c)$ is 0.5. 
Suppose now player 1$'$s  capacity increases from 0.5 to 0.6. Then the max-flow value increases from 1 to 1.1. However, $Sh_2(\hat c)$ decreases to 0.475. 
\end{example}

\begin{figure}[h]
 \centering
\begin{tikzpicture}[scale=5]
  \fill (1,0) circle (0.5pt) node[below] {$s$};
    \fill (2,0) circle (0.5pt) node[below] {$A$};
    \fill (3,0) circle (0.5pt) node[below] {$t$};

  \draw[red, line width=0.8pt, postaction={decorate,decoration={markings,mark=at position 0.5 with {\arrow{>[scale=1]}}}}] (1,0) to[out=60, in=120] node[above] {\textcolor{black}{$c_1=0.5 \rightarrow 0.6$}} (2,0);
 \draw[blue, line width=0.8pt, postaction={decorate,decoration={markings,mark=at position 0.5 with {\arrow{>[scale=1]}}}}](1,0) to[out=-60, in=-120] node[below] {\textcolor{black}{$c_2=0.5$}} (2,0);
  \draw[blue, line width=0.8pt, postaction={decorate,decoration={markings,mark=at position 0.5 with {\arrow{>[scale=1]}}}}] (2,0) to[out=60, in=120] node[above] {\textcolor{black}{$c_3=1$}} (3,0);
   \draw[blue, line width=0.8pt, postaction={decorate,decoration={markings,mark=at position 0.5 with {\arrow{>[scale=1]}}}}] (2,0) to[out=-60, in=-120] node[below] {\textcolor{black}{$c_4=1$}} (3,0);
\end{tikzpicture}

\caption{Shapley value violates CM}
\label{counterexample2}
\end{figure}
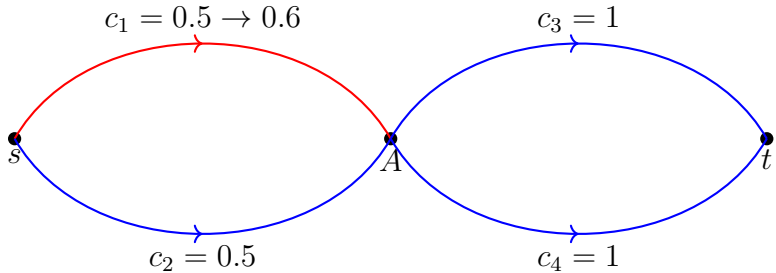

\subsection{More discussions on CM}
In this subsection, we discuss in more depth CM of the Shapley value mechanism. We shall show that, when one player's reported capacity increases, leading to an increase of the max-flow value, another player may always have her Shapley value increase, and may  always decrease, depending on their relationship. Before presenting this result, we introduce the concepts of complementarity and substitutability pioneered by \cite{shapley1961network}.

      A pair of edges \( e_i, e_j \in E\) is complementary (substitutable), if the following discrete second-order difference quotient is always non-negative (non-positive) for any \(x, y, a, b \in \mathbb{R}_+, ab\neq 0\):
    \[Q_{ij}(x,y,a,b)=\frac{F_{ij}(x+a,y+b)-F_{ij}(x+a,y)-F_{ij}(x,y+b)+F_{ij}(x,y)}{ab},\]
    where \(F_{ij}(x,y)\) is the maximum flow value when the capacities of $i$ and $j$ are $x$ and $y$, respectively, and the capacities of other edges are as determined in $c_{-ij}$.

    A surprising result proved in \cite{shapley1961network} is that each pair of edges are either complementary or substitutable. It is worth noting that this result depends on the capacities of other edges. To be more specific, when the capacities of other edges change, the relationship between two edges may also change. We strengthen these concepts as follows.



\begin{definition}[Constant Complementarity/Substitutability]
    A pair of edges \( e_i, e_j\) is constant complementary (substitutable), if  they are complementary (substitutable) for any configurations of the capacities of other edges.  
\end{definition}

Several types of  complementarity/substitutability investigated in \cite{shapley1961network} are actually constant, as illustrated in Figure \ref{Constant supermodularity-type} and Figure \ref{Constant submodularity-type}.  Constant complementarity and substitutability are useful concepts for us to investigate CM of the Shapley value.  We shall show below that CM is valid if two players are constantly complementary, and invalid if they are constantly substitutable.


\begin{figure}[H]
    \centering
    \begin{subfigure}[b]{0.45\textwidth}
        \centering
        \begin{tikzpicture}[scale=2.5]
            \fill (1,0) circle (1pt) node[below] {$1$};
            \fill (2,0) circle (1pt) node[below] {$2$};
            \fill (3,0) circle (1pt) node[below] {$3$};
            \draw[line width=0.8pt, postaction={decorate,decoration={markings,mark=at position 0.5 with {\arrow{>[scale=1]}}}}] (1,0) to[out=0, in=180] node[above] {$e_1$} (2,0);
            \draw[line width=0.8pt, postaction={decorate,decoration={markings,mark=at position 0.5 with {\arrow{>[scale=1]}}}}] (2,0) to[out=0, in=180] node[above] {$e_2$} (3,0);
        \end{tikzpicture}
        \caption{C-I}
        \label{C-I}
    \end{subfigure}
    \hfill
    \begin{subfigure}[b]{0.45\textwidth}
        \centering
        \begin{tikzpicture}[scale=2.5]
            \fill (1,0) circle (1pt) node[left] {$source$};
            \fill (2,0.5) circle (1pt) node[above] {$1$};
            \fill (2,-0.5) circle (1pt) node[below] {$2$};
            \fill (3,0) circle (1pt) node[right] {$sink$};
            \draw[line width=0.8pt, postaction={decorate,decoration={markings,mark=at position 0.5 with {\arrow{>[scale=1]}}}}] (1,0) to[out=60, in=180] node[above] {$e_1$} (2,0.5);
            \draw[line width=0.8pt, postaction={decorate,decoration={markings,mark=at position 0.5 with {\arrow{>[scale=1]}}}}] (2,-0.5) to[out=0, in=-120] node[below] {$e_2$} (3,0);
        \end{tikzpicture}
        \caption{C-II}
        \label{C-II}
    \end{subfigure}
    \caption{Two types of constant complementarity}
    \label{Constant supermodularity-type}
\end{figure}
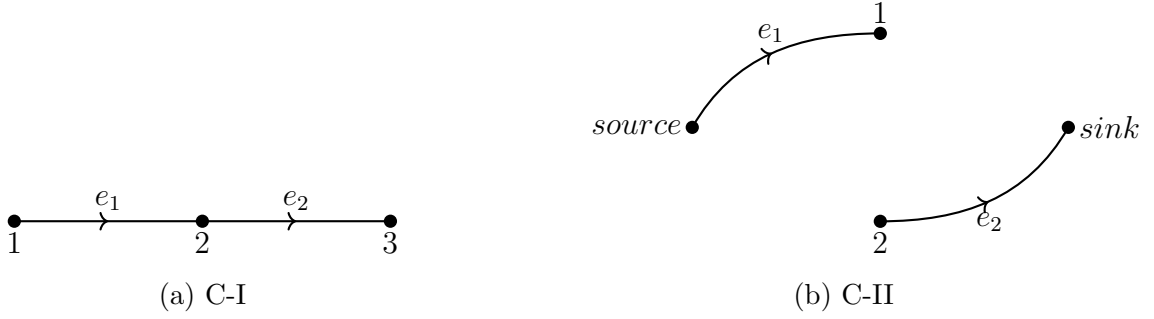

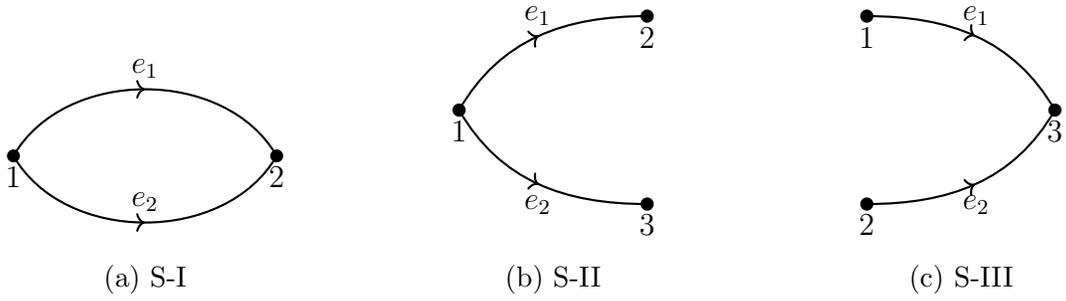
\begin{figure}[h]
    \centering
    \begin{subfigure}[b]{0.3\textwidth}
        \centering
        \begin{tikzpicture}[scale=3.5]
            \fill (2,0) circle (0.7pt) node[below] {$1$};
            \fill (3,0) circle (0.7pt) node[below] {$2$};
            \draw[line width=0.8pt, postaction={decorate,decoration={markings,mark=at position 0.5 with {\arrow{>[scale=1]}}}}] (2,0) to[out=60, in=120] node[above] {$e_1$} (3,0);
            \draw[line width=0.8pt, postaction={decorate,decoration={markings,mark=at position 0.5 with {\arrow{>[scale=1]}}}}] (2,0) to[out=-60, in=-120] node[above] {$e_2$} (3,0);
        \end{tikzpicture}
        \caption{S-I}
        \label{S-I}
    \end{subfigure}
    \hfill
    \begin{subfigure}[b]{0.3\textwidth}
        \centering
        \begin{tikzpicture}[scale=2.5]
            \fill (2,0) circle (1pt) node[below] {$1$};
            \fill (3,0.5) circle (1pt) node[below] {$2$};
            \fill (3,-0.5) circle (1pt) node[below] {3};
            \draw[line width=0.8pt, postaction={decorate,decoration={markings,mark=at position 0.5 with {\arrow{>[scale=1]}}}}] (2,0) to[out=60, in=180] node[above] {$e_1$} (3,0.5);
            \draw[line width=0.8pt, postaction={decorate,decoration={markings,mark=at position 0.5 with {\arrow{>[scale=1]}}}}] (2,0) to[out=-60, in=-180] node[below] {$e_2$} (3,-0.5);
        \end{tikzpicture}
        \caption{S-II}
        \label{S-II}
    \end{subfigure}
    \hfill
    \begin{subfigure}[b]{0.3\textwidth}
        \centering
        \begin{tikzpicture}[scale=2.5]
            \fill (1,0.5) circle (1pt) node[below] {$1$};
            \fill (1,-0.5) circle (1pt) node[below] {$2$};
            \fill (2,0) circle (1pt) node[below] {$3$};
            \draw[line width=0.8pt, postaction={decorate,decoration={markings,mark=at position 0.5 with {\arrow{>[scale=1]}}}}] (1,0.5) to[out=0, in=120] node[above] {$e_1$} (2,0);
            \draw[line width=0.8pt, postaction={decorate,decoration={markings,mark=at position 0.5 with {\arrow{>[scale=1]}}}}] (1,-0.5) to[out=0, in=-120] node[below] {$e_2$} (2,0);
        \end{tikzpicture}
        \caption{S-III}
        \label{S-III}
    \end{subfigure}
    \caption{Three types of constant substitutability}
    \label{Constant submodularity-type}
\end{figure}

\begin{proposition}\label{shcom}
If \( (e_i, e_j)\) is constantly complementary (constantly substitutable), then $Sh_j(\hat c)$ is increasing (decreasing) in \(\hat c_i\).
\end{proposition}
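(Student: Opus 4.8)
The plan is to expand the Shapley value into its marginal-contribution form and track, coalition by coalition, how each marginal contribution of player $j$ responds to an increase of $\hat c_i$. Write
\[
Sh_j(\hat c) = \sum_{j\in S\subseteq N} w_S\big(\hat v(S) - \hat v(S\setminus\{j\})\big), \qquad w_S = \frac{(|S|-1)!\,(|N|-|S|)!}{|N|!},
\]
and split the sum according to whether $e_i\in S$. For any $S$ with $e_j\in S$ but $e_i\notin S$, neither $\hat v(S)$ nor $\hat v(S\setminus\{j\})$ uses edge $e_i$, so this marginal contribution is constant in $\hat c_i$. Hence the only terms that can change are those with $\{e_i,e_j\}\subseteq S$, and since each weight $w_S\ge 0$, it suffices to show that every such marginal contribution is weakly increasing (resp. decreasing) in $\hat c_i$.

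Next I would fix a coalition $S\supseteq\{e_i,e_j\}$ and denote by $F^{S}_{ij}(x,y)$ the max-flow value using only the edges of $S$, with the capacities of $e_i,e_j$ set to $x,y$ and the remaining edges of $S$ held at their reported capacities. Removing $e_j$ is the same as setting $y=0$, so the marginal contribution of interest is $F^{S}_{ij}(\hat c_i,\hat c_j)-F^{S}_{ij}(\hat c_i,0)$. Raising $\hat c_i$ by $a>0$ changes it by exactly
\[
F^{S}_{ij}(\hat c_i+a,\hat c_j)-F^{S}_{ij}(\hat c_i+a,0)-F^{S}_{ij}(\hat c_i,\hat c_j)+F^{S}_{ij}(\hat c_i,0) = a\,\hat c_j\, Q^{S}_{ij}(\hat c_i,0,a,\hat c_j),
\]
which is precisely the second-order difference quotient from the definition of complementarity, now computed for the subnetwork $S$. (If $\hat c_j=0$ the marginal contribution is identically zero, so that case is trivial and we may assume $\hat c_j>0$, making the quotient well defined.) Because $F$ is only piecewise linear, I would phrase the whole argument through these finite differences rather than derivatives.

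The key observation is that restricting to the edges of $S$ is the same as the full network with every edge outside $S$ assigned capacity zero; this is one admissible configuration of the capacities of the edges other than $e_i,e_j$. Constant complementarity of $(e_i,e_j)$ therefore yields $Q^{S}_{ij}(\hat c_i,0,a,\hat c_j)\ge 0$ for this configuration, so each marginal-contribution increment is nonnegative, and summing against the nonnegative weights $w_S$ gives $Sh_j(\hat c_i+a,\hat c_{-i})\ge Sh_j(\hat c_i,\hat c_{-i})$. The constantly substitutable case is identical with all inequalities reversed.

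The only real subtlety, and the step I would state most carefully, is the legitimacy of the configuration used in the previous paragraph: constant complementarity is phrased for configurations of the other edges' capacities, and I am invoking it at the boundary where those outside $S$ are set to zero. Since the max-flow value $F$ is continuous in the capacity vector, $Q^{S}_{ij}\ge 0$ on all strictly positive configurations passes to the limit as the edges outside $S$ shrink to zero, so the hypothesis does apply to each coalitional subproblem. This continuity/limiting remark is the main (minor) obstacle; the rest is bookkeeping on the Shapley decomposition.
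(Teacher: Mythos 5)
Your proof is correct and takes essentially the same route as the paper's: decompose $Sh_j$ into nonnegatively weighted marginal contributions, note that only coalitions $S\supseteq\{e_i,e_j\}$ matter, and apply constant complementarity to the subnetwork flow function $\hat F^{S}_{ij}$ obtained by setting the capacities of edges outside $S$ to zero. Your added care --- the trivial $\hat c_j=0$ case and the continuity argument extending $Q^{S}_{ij}\ge 0$ to the zero-capacity boundary configuration --- merely makes rigorous a step the paper invokes directly.
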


\begin{proof}
We only consider the case of constant complementarity, as the case of constant substitutability can be proved in a similar way. Suppose now the reported capacity of player $i$ increases from \(\hat c_i\) to \(\hat c'_i\).
For any coalition \(S\subseteq N\) such that \(\{i, j\} \subset S\), consider the subgraph \(G_S\) induced by \(S\). We use  \(\hat F^S_{i,j}(\cdot,\cdot)\) to denote the max-flow function of $i,j$ based on the reported capacities when all edge capacities outside of $S$ are zero.  Due to the constant complementarity of \( (e_i, e_j)\),
we have
 \[\hat F^S_{i,j}(\hat c'_i,\hat c_j)-\hat F^S_{i,j}(\hat c'_i,0)  \geq 
     \hat F^S_{i,j}(\hat c_i,\hat c_j)-\hat F^S_{i,j}(\hat c_i,0).\]

Let   \(\hat v(S|\hat c_i)\) denote the value of coalition \(S\) with the reported capacity of  player $i$ \(\hat c_i\), the above inequality can be rewritten as
\[\hat v(S|\hat c'_i)-v(S\setminus\{j\}|\hat c'_i) \geq \hat v(S|\hat c_i)-\hat v(S\setminus\{j\}|\hat c_i),\]
implying that \(\hat v(S|\hat c_i)-\hat v(S\setminus\{j\}|\hat c_i)\) increases in \(\hat c_i\). 
By definition, it can be seen that \(Sh_{j}(\hat c)\)  is increasing in \(\hat c_i\).
\end{proof}

Proposition \ref{shcom} could be viewed as a comparative statics of the Shapley value on the max-flow problem. It implies that the Shapley value mechanism for max-flow problems does satisfy CM if all pairs of players are constantly complementary (e.g., when the network consists of several separate $s$-$t$ paths).

\section{Minimal-cut Based Mechanism}

Since the Shapley value violates the three desirable properties SP, MP and CM, it is natural to ask whether we are able to design a mechanism such that all the listed desirable properties are met. In this section, we propose such a mechanism that is based on minimal-cuts. 

\subsection{Definition}
    
    Given an acyclic digraph \(G = (V,E,c)\), a set of edges \( W \subseteq E\) is called a cut if there is no $s$-$t$ path  when  all edges in $W$ are removed. A  cut $M$ is minimal if \(M\setminus \{e\}\) is not a cut  for any edge \(e\in M\). A minimum cut is a  cut whose total edge capacity is the minimum among all cuts.

\begin{definition}[MC Mechanism]  Given an acyclic digraph \(G = (V,E,c)\), let the reported capacity of each player $i \in E$ be $\hat c_i \in [0, c_i]$. After all players reporting their capacities, the MC mechanism goes in two steps.  (i) For all  $s$-$t$ edges  \(e_i\in E\), we allocate payoffs of their capacities, i.e., $MC_i(\hat c)=\hat{c}_i$. (ii) We remove all $s$-$t$ edges and use ${\mathbf M}$ to denote the set of all minimal cuts of the remaining graph. Suppose the max-flow value of the remaining graph is 
\(\hat F\), and for each minimal cut $M$ of the remaining graph, we use  \(\hat c(M)=\sum\limits_{e\in M}\hat c(e)\) to denote its total capacity. 
For $e_i\in E$, its payoff is
\[MC_i(\hat c)=\frac{\hat F}{|{\mathbf M}|}\sum\limits_{M\in {\mathbf M}: M\ni e_i} \frac{\hat c(e)}{\hat c(M)}\].
\end{definition}

The formula in the first step is intuitive. Since for each player whose edge is an $s$-$t$ one, her contribution to any coalition is always her capacity, it's reasonable to allocate this value to her.

The formula in the second step is also intuitive. It could be understood as first allocating the total payoff \(\hat F\) evenly among all minimal cuts, i.e., each minimal cut receives \(\frac{\hat F}{|{\mathbf M}|}\),  and then, for each minimal cut, allocating the received \(\frac{\hat F}{|{\mathbf M}|}\) to the corresponding players proportional to their capacities.



\subsection{Main properties}

\begin{theorem}The MC mechanism is DSIC, SIR, SP, MP, and CM for max-flow games.\label{Monotonicity of MC}
\end{theorem}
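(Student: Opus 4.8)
The plan is to lean on one structural observation that trivializes most of the work: the family of minimal cuts $\mathbf{M}$ and the set of $s$-$t$ edges depend only on the topology of $G$, not on the capacities. Consequently $\mathbf{M}$, $|\mathbf{M}|$, and the split of $E$ into $s$-$t$ and non-$s$-$t$ edges are all held fixed as capacities vary, as a player misreports, or (up to a controlled bijection) as an edge splits. This reduces each property to an elementary inequality among the scalars $\hat F$ and the cut totals $\hat c(M)$.

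For DSIC I would show $MC_i(\hat c)$ is nondecreasing in a player's own report $\hat c_i$. On an $s$-$t$ edge this is immediate from $MC_i=\hat c_i$. On a non-$s$-$t$ edge, writing $\hat c(M)=\hat c_i+r_M$ with $r_M\ge 0$ fixed, each factor $\hat c_i/\hat c(M)=\hat c_i/(\hat c_i+r_M)$ is increasing in $\hat c_i$, and $\hat F=\min_{M}\hat c(M)$ is nondecreasing in $\hat c_i$; since $\mathbf{M}$ is fixed, the product is nondecreasing, so reporting the maximal feasible $c_i$ is optimal. For SIR I would verify individual rationality directly ($MC_i=\hat c_i=\hat v_i$ on $s$-$t$ edges, $MC_i\ge 0=\hat v_i$ otherwise) and then strict positivity: on an $s$-$t$ edge it is immediate from $MC_i=\hat c_i$, while for a non-$s$-$t$ edge I would invoke the standard fact that an edge lies on some $s$-$t$ path iff it belongs to some minimal cut, so the defining sum contains a term $\hat c_i/\hat c(M)>0$ and, as $\hat F>0$ whenever such a path carries positive flow, $MC_i>0$.

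I would treat SP and MP together, since merging is the inverse of splitting. Replacing $e_i$ by two basic parallel edges $e_{i1},e_{i2}$ with $\hat c_{i1}'+\hat c_{i2}'=\hat c_i$ induces a bijection on minimal cuts: a minimal cut containing $e_i$ corresponds to the one containing both pieces, every other minimal cut is unchanged, and a cut meeting exactly one piece is never minimal. This bijection preserves $|\mathbf{M}|$, preserves each cut total $\hat c(M)$, and leaves $\hat F$ unchanged (parallel edges of total capacity $\hat c_i$ carry the same flow as a single such edge), so a one-line computation gives $MC_{i1}(\hat c')+MC_{i2}(\hat c')=MC_i(\hat c)$. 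Hence both SP and MP hold with equality.

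The main obstacle is CM, the only property where raising $\hat c_i$ acts on another player $j$ through two opposing channels: the value $\hat F$ rises, but every minimal cut containing both $e_i$ and $e_j$ is diluted. If $e_i$ is an $s$-$t$ edge the second step is untouched and $MC_j$ is unchanged, so assume $e_i$ is a non-$s$-$t$ edge whose increment $\Delta$ raises $\hat F$ by the same amount $\Delta$ — the regime in which the max-flow strictly increases. I would then argue cut-by-cut on $MC_j=\tfrac{\hat c_j}{|\mathbf{M}|}\sum_{M\ni e_j}\hat F/\hat c(M)$: for a cut $M$ with $e_i\notin M$ the ratio $\hat F/\hat c(M)$ rises because only $\hat F$ grows, while for a cut $M$ with $e_i\in M$ both $\hat F$ and $\hat c(M)$ grow by $\Delta$ and $\frac{\hat F+\Delta}{\hat c(M)+\Delta}\ge\frac{\hat F}{\hat c(M)}$ holds exactly because $\hat F=\min_{M}\hat c(M)\le\hat c(M)$. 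Summing the per-cut terms yields $MC_j(\hat c')\ge MC_j(\hat c)$. The delicate point I would flag, and where the proof is genuinely non-routine rather than computational, is that the cut inequality needs $\hat F$ to rise at the same rate as the diluted totals $\hat c(M)$, i.e. $e_i$ to remain in a minimum cut throughout; this is precisely the force of the CM hypothesis that the capacity increase strictly raises the max-flow.
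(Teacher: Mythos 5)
Your proposal is correct and takes essentially the same route as the paper: your DSIC, SIR, SP and MP arguments coincide with the paper's (capacity-independence of the minimal-cut family, monotonicity of $\hat F$ and of $\hat c_i/(\hat c_i + r_M)$ in $\hat c_i$, and the cut-preserving bijection under splitting/merging, which the paper likewise uses to get SP and MP with equality). For CM the paper merely defers to Theorem \ref{Cross-effect between edges}, and your inline cut-by-cut argument reproduces exactly the computation in that theorem's proof --- the inequality $\frac{\hat F+\Delta}{\hat c(M)+\Delta}\geq \frac{\hat F}{\hat c(M)}$ from $\hat F\leq \hat c(M)$, plus rate-one growth of $\hat F$ below the critical value --- and the delicate point you flag (that the increase must keep $e_1$ essential, i.e.\ $\hat c_1,\hat c_1'\in(0,x^*(e_1)]$, rather than merely make the max-flow strictly increase) is precisely the reading the paper itself adopts in its discussion following Theorem \ref{Cross-effect between edges}.
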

\begin{proof}

{\bf DSIC:} For any player $i\in N$, given the reported capacities of other players, we claim that her payoff in the MC mechanism $MC_i$ is increasing in her own reported capacity $\hat c_i$.  The claim is true for any player owning a source-to-sink edge, because the allocated payoff is her reported capacity. The claim is also true for players owning a non-source-to-sink edge, because when  her reported capacity increases, both the max-flow value $F(c)$ and her proportion in every minimal cut increase too. This monotonicity demonstrates that no player has an incentive to under-report, and hence the MC mechanism is DSIC, because we have assumed that no player can over-report.

{\bf SIR:} To prove IR, i.e., $MC_i(\hat c)\geq \hat v(e_i)$, we only need to consider players possessing source-to-sink edges. Since they are allocated their stand-alone payoffs $\hat v(e_i)$ due to the first step of the MC mechanism, IR is obvious. Since we have assumed that each player's edge is on at least one $s$-$t$ path, to further show SIR, we need to prove that that $MC_i>0$ for every player $i$. This is indeed correct because the above assumption implies that every edge belongs to at least one minimal cut (after the removing of source-to-sink edges).

{\bf SP \& MP:} For any two parallel edges \(e_1\) and \(e_2\), we consider what happens when  they merge into a single edge \(e_{1,2}\) with \(c_{1,2}=c_1 + c_2\) and \(\hat c_{1,2}=\hat c_1 +\hat c_2 \). Observe that either they are both source-to-sink edges, or neither of them is. In the former case, the allocated payoff by MC is obviously the sum of their previous separate payoffs. Thus we only need to consider the latter case. Note in this case that this merge does not change the set of minimal cuts and the total capacity of any minimal cut and the max-flow value. This ensures that the MC mechanism is MP. SP can be shown in the same way.

{\bf CM:} The CM property of the MC mechanism follows directly from Theorem \ref{Cross-effect between edges} that will be introduced soone and thus we omit its proof.   
\end{proof}

Before comparing the MC mechanism  with the core, we recall a characterization of the core for max-flow games with unit-capacity edges. For these games,  the core is the convex hull of the indicator vectors of all the minimum cuts\footnote{See Theorem 4 of \cite{1982kalaibalanced}. Note that the term ``minimal cut'' in their paper refers  actually to ``minimum cut'' in most literature, including this paper. See also \cite{Deng2009nucleolus}.}. Though this property does not extend to general max-flow games\footnote{See Example 1 of the working paper version of \cite{1982kalaibalanced}.}, it is helpful for us to better understand the differences between core-selection mechanisms and the MC mechanism.

Compared with the many nice properties of the MC mechanism, no core-selection mechanism is NEIC. A key difference between a minimal cut and a minimum cut to explain this is that while a minimal cut is pure structural, i.e., it only depends on the topology of the network, a minimum cut depends also on the capacities of the edges. These capacities are private information and may not be truthfully reported. Specifically, players may have incentives to under-report their capacities to be included in more minimum cuts. In contrast, for the MC mechanism, players do not have incentives to under-report, because under-reporting does not change the set of minimal cuts, but harms their allocated payoff in each minimal cut.

The first step (i) is critical to the MC mechanism, without which it cannot be IR (and consequently cannot be SIR). To see this,  consider the example in Figure \ref{firststep}. There are 3 players in the game, among whom player 3 possesses an $s$-$t$ edge. If we skipped the first step, then we would have $MC_3=5/6<1$, which contradicts the definition of IR.

\begin{figure}[h]
\centering
\begin{tikzpicture}[scale=3]
   \fill (1,0) circle (1pt) node[below] {$s$};
   \fill (2,0) circle (1pt) node[below] {$A$};
   \fill (3,0) circle (1pt) node[below] {$t$};
  \draw[line width=0.8pt, blue,postaction={decorate,decoration={markings,mark=at position 0.5 with {\arrow{>[scale=1]}}}}] (1,0) to[out=0, in=180] node[above] {$c_1=1$} (2,0);
  \draw[blue, line width=0.8pt, postaction={decorate,decoration={markings,mark=at position 0.5 with {\arrow{>[scale=1]}}}}] (2,0) to[out=0, in=180] node[above] {$ c_{2}=2$} (3,0);

  \draw[red, line width=2pt, postaction={decorate,decoration={markings,mark=at position 0.5 with {\arrow{>[scale=1]}}}}] (1,0) to[out=60, in=120] node[above] {\textcolor{black}{$c_3=1$}} (3,0);
\end{tikzpicture}
\caption{Necessity of the first step in the  MC mechanism.}
\label{firststep}
\end{figure}
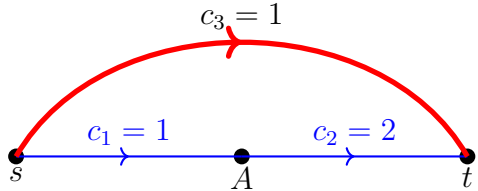

Combining  Theorem \ref{Monotonicity of MC} and Example \ref{counterexample}, we know that the allocation suggested by the MC mechanism may not be in the core for max-flow games, and thus players may have incentives to collude.

\subsection{More discussions on CM}
 We introduce two more terms to facilitate  our further analysis of the CM property.



\begin{definition}[Independent Edge Pairs]
A pair of edges \((e_1, e_2)\)  in $G$ is \emph{independent} if no minimal cut contains both of them. 
\end{definition}

\begin{definition}[Inclusive Edge Pairs]
 An ordered pair of edges \((e_1, e_2)\) in $G$ is  \emph{inclusive} if, for every minimal cut $M\ni e_2$, (i) $e_1\in M$, and (ii) $M\setminus \{e_1\}$ is a minimum cut of $G\setminus \{e_1\}$.
\end{definition}


While the concept of an independent edge pair is straightforward, the concept of an inclusive edge pair is more complex. To better understand the concept of inclusiveness, we introduce the terms 'critical value' \citep{shapley1961network} and 'essential edge' \citep{2023vvv}.  For  each edge $e$, its critical value is defined as \[x^*(e)= F_e(+\infty)-F_e(0).\] Intuitively, critical value of an edge is the capacity threshold such that the max-flow value is increasing below this threshold but keeps unchanged once it is exceeded. 
 An edge \(e\) is called  essential if the max-flow value of \(G\) strictly decreases when \(c(e)\) decreases, and inessential otherwise. 
Equivalently, \(e\) is an essential edge if and only if \(c(e)\leq x^*(e)\). The critical value, as well as essential and inessential edges,  is illustrated in  Figure \ref{flow function} via the max-flow function.

\begin{figure}[h]
\centering
\begin{tikzpicture}[scale=1.5]

  \draw[->] (-0.5,0) -- (3,0) node[right] {$c(e)$};
  \draw[->] (0,-0.5) -- (0,3) node[above] {$F_e(c(e))$};
 
  \draw[line width=0.8pt, domain=0:1.5,smooth,variable=\x,blue] (0,0.5)--(1.5,2)
    node[midway, below] {Essential};
\draw[line width=0.8pt, domain=1.5:3,smooth,variable=\x,red] (1.5,2)--(3,2) node[midway, below] {Inessential};
 
   \fill (0,0.5) circle (1 pt) node[left] {$F_{e}(0)$};
  \filldraw[blue] (1.5,2) circle (1pt) node[above, black] {};
  \filldraw[black] (1.5,0) circle (1pt) node[below] {$x^*(e)$};
  \filldraw[black] (0,2) circle (1pt) node[left] {$F_{e}(0)+x^*(e)$};
  \draw[dashed] (1.5,2) -- (1.5,0) ;
  \draw[dashed] (1.5,2) -- (0,2) ;
\end{tikzpicture}
\caption{The max-flow function \citep{shapley1961network}.}
    \label{flow function}
\end{figure}

 When an ordered pair of edges \((e_1, e_2)\) of $G$ is inclusive, it means two things. First, the family of minimal cuts containing $e_2$ is a subfamily of minimal cuts containing $e_1$. Thus, when the capacity of $e_1$ changes, the total capacity value of each minimal cut containing $e_2$ is affected accordingly. However, this does not ensure that the max-flow value also changes. Second, if we delete $e_1$ in the graph, then each minimal cut containing $e_2$ in $G$ is actually a minimum cut in $G\setminus \{e_1\}$.

It can be verified that two edges are independent if they are connected in series (Figure \ref{C-I}), and inclusive if they are connected in parallel (Figure \ref{S-I}). A pair of edges may be neither independent nor inclusive. In fact,  this relationship is more typical than the discussed one: two edges are generally neither independent nor inclusive.

\begin{theorem}
Let the reported capacity of edge \(e_1\) increase, we analyze its impact on $MC_2$, the payoff allocated to  \(e_2\) in the MC mechanism. If either of them is an s-t edge,  $MC_2$ remains unchanged. Suppose that neither of them is an s-t edge in the following.
   \begin{enumerate}[(i)]
       
       \item If \((e_1, e_2)\) is independent, then \(MC_2(\hat c)\) is strictly increasing in \(\hat c_1\in (0,x^*(e_1)]\) and  remains constant in \(\hat c_1\in (x^*(e_1),\infty)\).

       \item If \((e_1, e_2)\) is inclusive, then \(MC_2(\hat c)\)  remains constant in \(\hat c_1\in (0,x^*(e_1)]\) and is strictly decreasing in \(\hat c_1\in (x^*(e_1),\infty)\).

       \item Otherwise, \(MC_2(\hat c)\) is strictly increasing  in \(\hat c_1\in (0,x^*(e_1)]\) and strictly decreasing in \(\hat c_1\in (x^*(e_1),\infty)\).
   \end{enumerate}
   \label{Cross-effect between edges}
\end{theorem}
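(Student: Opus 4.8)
The plan is to collapse the whole statement into a single explicit formula for $MC_2$ as a function of $\hat c_1$ and then read off its monotonicity from one elementary derivative, exploiting the fact that minimal cuts are purely structural. First I would dispose of the $s$-$t$ cases. If $e_2$ is an $s$-$t$ edge then $MC_2=\hat c_2$ does not involve $\hat c_1$ at all; if $e_1$ is an $s$-$t$ edge then raising $\hat c_1$ only changes $e_1$'s own (Step (i)) payoff and leaves the remaining graph on which Step (ii) operates completely untouched. Either way $MC_2$ is constant, so from now on I assume neither $e_1$ nor $e_2$ is an $s$-$t$ edge, so both survive into the remaining graph.

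The crucial observation is that the family $\mathbf{M}$ of minimal cuts, and hence $|\mathbf{M}|$, is independent of the capacities and therefore constant as $\hat c_1$ varies. Writing $F_0$ for the max-flow of the remaining graph with $e_1$ deleted, and recalling from the max-flow function (Figure \ref{flow function}) that $\hat F(\hat c_1)=F_0+\min(\hat c_1,x^*(e_1))$, I would split the sum defining $MC_2$ according to whether a minimal cut through $e_2$ also contains $e_1$, writing $d_M=\hat c(M\setminus\{e_1\})$ for the $\hat c_1$-independent part of a cut's capacity and $B=\hat c_2\sum_{M\ni e_2,\,e_1\notin M}\hat c(M)^{-1}$ for the contribution of cuts missing $e_1$:
\[
MC_2(\hat c)=\frac{\hat F(\hat c_1)}{|\mathbf{M}|}\left(\hat c_2\sum_{M\ni e_1,e_2}\frac{1}{d_M+\hat c_1}+B\right).
\]
Both $d_M$ and $B$ are independent of $\hat c_1$. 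The key structural lemma I would prove is that $d_M\geq F_0$ for every minimal cut $M\ni e_1$, with equality precisely when $M\setminus\{e_1\}$ is a minimum cut of $G\setminus\{e_1\}$; this holds because $M\setminus\{e_1\}$ is always a cut of $G\setminus\{e_1\}$, whose minimum cut value equals $F_0$ by max-flow/min-cut.

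With this in hand the three cases fall out of one computation. On $(0,x^*(e_1)]$ we have $\hat F=F_0+\hat c_1$, and differentiating gives
\[
\frac{d}{d\hat c_1}\bigl(|\mathbf{M}|\,MC_2\bigr)=B+\hat c_2\sum_{M\ni e_1,e_2}\frac{d_M-F_0}{(d_M+\hat c_1)^2},
\]
every term of which is nonnegative by the lemma. For an independent pair the inner sum is empty while $B>0$ (since $e_2$, being a non-$s$-$t$ edge on an $s$-$t$ path, lies in some minimal cut), giving strict increase; for an inclusive pair $B=0$ and every $d_M=F_0$, so the derivative vanishes identically and $MC_2$ is constant; otherwise at least one summand is strictly positive, so $MC_2$ strictly increases. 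On $(x^*(e_1),\infty)$ the factor $\hat F$ is the constant $F_0+x^*(e_1)$, so only the inner sum moves and $\tfrac{d}{d\hat c_1}(|\mathbf{M}|\,MC_2)=-\hat F\,\hat c_2\sum_{M\ni e_1,e_2}(d_M+\hat c_1)^{-2}$, which is zero when the pair is independent and strictly negative otherwise (the inner sum is nonempty exactly when the pair is not independent). Since $MC_2$ is weakly increasing on the essential range $(0,x^*(e_1)]$ where $\hat F$ strictly increases, the CM clause deferred from Theorem \ref{Monotonicity of MC} follows.

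The step I expect to be most delicate is case (iii), specifically the sub-case in which every minimal cut through $e_2$ still contains $e_1$ (so $B=0$) yet the pair fails to be inclusive because clause (ii) of inclusiveness breaks for some such cut; there I must exhibit a minimal cut $M\ni e_1,e_2$ with $d_M>F_0$ to guarantee strictness, which is exactly where the equality clause of the structural lemma is used. I would also take care to justify the unit slope $\hat F=F_0+\hat c_1$ on the essential range and to handle the kink at $\hat c_1=x^*(e_1)$ via one-sided limits, the function $MC_2$ being continuous there.
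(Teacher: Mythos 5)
Your proposal is correct, and it reorganizes the paper's argument rather than merely reproducing it. The ingredients are the same: you split the minimal cuts through $e_2$ according to whether they contain $e_1$, and your structural lemma ($d_M\geq F_0$ for every minimal cut $M\ni e_1$, with equality iff $M\setminus\{e_1\}$ is a minimum cut of $G\setminus\{e_1\}$) is exactly the fact the paper deploys pointwise --- there it appears as the inequality $\hat c(M)\geq \hat F$ giving $\frac{\hat F}{\hat c(M)}\leq\frac{\hat F+x}{\hat c(M)+x}$, with strictness in case (iii)(b) coming from $\hat c''(M_0)>\hat F''$, which is your $d_{M_0}>F_0$. The genuine difference is packaging: the paper runs three separate case analyses with discrete increments $x$, while you collapse everything into the single formula $|\mathbf{M}|\,MC_2=\hat F(\hat c_1)\bigl(\hat c_2\sum_{M\ni e_1,e_2}(d_M+\hat c_1)^{-1}+B\bigr)$ with $\hat F(\hat c_1)=F_0+\min(\hat c_1,x^*(e_1))$, and read the entire trichotomy off the sign of one derivative, $B+\hat c_2\sum_{M\ni e_1,e_2}(d_M-F_0)(d_M+\hat c_1)^{-2}$ on the essential range. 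This buys uniformity --- cases (i), (ii), (iii)(a), (iii)(b) are literally ``inner sum empty,'' ``all terms zero,'' ``$B>0$,'' ``some $d_{M_0}>F_0$'' --- and makes transparent why the kink sits exactly at $x^*(e_1)$; the paper's discrete comparisons avoid calculus and never need the explicit piecewise-linear form of $\hat F$, only the weaker statement that an essential edge raises the flow by exactly the capacity increment (the same fact in disguise). Your one outstanding obligation, which you flagged, is the unit-slope formula itself; it follows in a line from $\hat F(\hat c_1)=\min_{M\in\mathbf{M}}\hat c(M)=\min\bigl(d^*+\hat c_1,\,a^*\bigr)$, where $d^*=\min_{M\ni e_1}d_M$ and $a^*=\min_{M\not\ni e_1}\hat c(M)$, so that $F_0=d^*$ and $x^*(e_1)=a^*-d^*$ (when nonnegative). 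With that line added, your proof is complete and, if anything, cleaner than the original; do state explicitly, as the paper does implicitly via its SIR argument, that $e_2$ lies in at least one minimal cut so that $B>0$ in the independent case.
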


\begin{proof}

(i) Suppose now that \((e_1, e_2)\) are independent. In this case, any minimal cut \(M\) containing \(e_2\) does not contain \(e_1\). The first part of the result follows from the fact that player 2's proportion in every minimal cut remains unchanged, while the max-flow value increases. The second part follows from the observation that the increase in \(\hat{c}_1\) does not affect either the max-flow value or player 2's proportion in any minimal cut.

(ii) Suppose now that \((e_1, e_2)\) is inclusive. Let $M$ be any minimal cut  containing $e_2$. By the definition of inclusiveness, 
$M$ also contains \(e_1\), and $M\setminus \{e_1\}$ is a minimum cut of $G\setminus \{e_1\}$.
In other words, when the reported capacity of player 1 is 0, i.e., $\hat c_1'' =0$, $M$ is a minimum cut, $\hat F''=\hat c''(M)$.

   Now, consider the first part.  Let \(\hat c_1\) increase to \(\hat c_1' = \hat c_1+ x\in  (0,x^*(e_1)]\). Thus $e_1$ remains an essential edge both before and after the increase, indicating that the max-flow value also increases by $x$.

   Since \(\hat c_1, \hat c_1' \in  (0,x^*(e_1)]\), we have $\hat F =\hat F''+ \hat c_1=\hat c''(M) + \hat c_1=\hat c(M)$, $\hat F' =\hat F''+ \hat c_1'=\hat c''(M) + \hat c_1'=\hat c'(M)$. Thus $M$ is a minimum cut  both before and after the increase in $e_1$'s reported capacity. Therefore,
    \begin{eqnarray*}MC_2(\hat c)&=&\sum\limits_{e_2\in M \in {\mathbf M}} \frac{\hat F}{|{\mathbf M}|} \frac{\hat c_2}{\hat c(M)}\\
    &=&\sum\limits_{e_2\in M \in {\mathbf M}} \frac{\hat c_2}{|{\mathbf M}|}\\
&=&\sum\limits_{e_2\in M \in {\mathbf M}} \frac{\hat c_2}{|{\mathbf M}|}\frac{\hat F+x}{\hat c(M)+x}\\
   &=&MC_2(\hat c').\end{eqnarray*}

    The second part is trivial, because when  \(\hat c_1\) increases in \((x^*(e_1),\infty)\), the max-flow value keeps unchanged, but the proportion of the capacity of $e_2$ decreases in every minimal cut.

    (iii) Suppose that \((e_1, e_2)\) is neither independent nor inclusive. Since \((e_1, e_2)\) is not independent, there exists at least one minimal cut containing both of them, and hence the second part can be proved in the similar way as in the proof of (ii). We will now focus on the first part.

     Since \((e_1, e_2)\) is not inclusive, either (a) there exists a minimal cut $M_0\ni e_2$  such that  \(e_1\notin M_0\), or (b) there exists a minimal cut $M_0\ni e_1,e_2$ such that $M_0\setminus \{e_1\}$ is not a minimum cut of $G\setminus \{e_1\}$. We now consider these two cases separately.  Let player 1's reported capacity increase from $\hat c_1$ to $\hat c_1+x\in (0,x^*(e_1)]$. As in the proof of (ii), we only need to consider the change in $\frac{\hat F}{\hat c(M)}$, since $\hat c_2$ and $|\mathbf{M}|$ both remain unchanged. 
    
    (a) Suppose now that there exists a minimal cut $M_0\ni e_2$  such that   \(e_1\notin M_0\).  Player 2's payoff consists of two components: one from minimal cuts containing both $e_1, e_2$, and the other from minimal cuts not containing $e_1$.  Player 2's payoff from every minimal cut $M$ containing both $e_1$ and $e_2$ is non-decreasing in $(0,x^*(e_1)]$, because the capacity of any cut $\hat c(M)$ is no less than the max-flow value $\hat F$, which implies $\frac{\hat F}{\hat c(M)} \leq \frac{\hat F+x}{\hat c(M)+x}$. Player 2's payoff from minimal cuts $M_0$ not containing $e_1$ is strictly increasing since the max-flow value increases while player 2's proportion remains unchanged. In summary, player 2's payoff is strictly increasing in $\hat c(e_1)\in (0,x^*(e_1)]$, and the first part of (iii) holds in this case. 
    
    (b)  Suppose now that there exists a minimal cut $M_0\ni e_1,e_2$ such that $M_0\setminus \{e_1\}$ is not a minimum cut of $G\setminus \{e_1\}$. Suppose also that (a) does not hold, then player 2's payoff comes entirely from minimal cuts containing both $e_1, e_2$. As in the proof of (a), player 2's payoff from every minimal cut $M$, rather than $M_0$, containing both $e_1$ and $e_2$,  is non-decreasing in $(0,x^*(e_1)]$. 
    
   It remains to consider $M_0$. Let $\hat c''(e_1)=0$, then the capacity of $M_0$ exceeds the max-flow value, i.e., $\hat c''(M_0) > \hat F''$. Thus, we have \[\frac{\hat F}{\hat c(M_0)} =\frac{\hat F'' + \hat c_1}{\hat c''(M_0)+\hat c_1} < \frac{\hat F''+ \hat c_1+x}{\hat c''(M_0)+ \hat c_1+x}= \frac{\hat F'}{\hat c'(M_0)}.\]In summary, player 2's payoff is strictly increasing in $\hat c(e_1)\in (0,x^*(e_1)]$, and the first part of (iii) is correct in this case. 
\end{proof}

Theorem \ref{Cross-effect between edges} provides a more accurate version of the CM part of Theorem \ref{Monotonicity of MC}. The CM part of Theorem \ref{Monotonicity of MC} states that   \(MC_2(\hat c)\) increases as  \(\hat c_1\) increases and the max-flow value increases too, in which case it must be that  \(\hat c_1\) does not exceed the critical value \(x^*(e_1)\), i.e.,   \(\hat c_1\in (0,x^*(e_1)]\), because otherwise the max-flow value would not change. Theorem \ref{Cross-effect between edges} further refines this by stating that \(MC_2(\hat c)\) strictly increases if \((e_1, e_2)\) is independent, \(MC_2(\hat c)\) remains constant if \((e_1, e_2)\) is inclusive.

The CM part of Theorem \ref{Monotonicity of MC} does not say anything about the case that the capacity of $e_1$ increases but the max-flow value does not change, i.e., the case \(\hat c_1\in (x^*(e_1),\infty)\).  Theorem \ref{Cross-effect between edges} states that, in this case, \(MC_2(\hat c)\) either strictly decreases or remains constant, depending on the relationship between the two edges.


\section{Conclusion}
In this paper, we considered  a mechanism design problem in the max-flow setting where each player possesses an edge and the edge capacity is private information. We first showed that no core allocation is capable of soliciting the true information from the players and then designed a minimal cut based mechanism that is not only dominant strategy incentive compatible, but also satisfies several other natural properties. More discussions on core-selection mechanisms are provided in Appendix \ref{app:core}, which, together with previous discussions, gives a comparison of the major mechanism as summarized in Table \ref{tab:your_label}. 

To the best of our knowledge, we are the first to consider this problem, and the design of the mechanism based on minimal cuts is also novel. While we have assumed that the network has a single source and a single sink,  our main results can be generalized to allow multiple sources and/or multiple sinks. 


An obvious problem for future research is whether the MC mechanism can be computed efficiently, and if so, design efficient algorithms to compute it. Note that enumerating all minimal cuts is obviously inefficient, because the number of minimal cuts is not polynomially bounded.  The scenario that each player may possess multiple edges is also worth investigating.

We have assumed that no cost is incurred when a player's edge is used, and thus a player whose edge is not used at all may still receive a payoff larger than someone whose edge is used. This is of course not reasonable when there are edge costs associated with their use. It is natural to assume that the costs are also private information, which makes the mechanism design problem closer to the minimum-cost flow problem.  This interesting setting will be explored in the future.

\bibliography{main}

\newpage

\appendix

\section{More Discussions on Core-selection Mechanisms}\label{app:core}

\begin{proposition}There exists a core-selection mechanism that satisfies both SP and MP. 
\end{proposition}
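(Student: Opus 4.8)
The plan is to exhibit one explicit core-selection mechanism and to obtain SP and MP by making it \emph{invariant} under the splitting and merging of parallel edges. The natural source of such an allocation is LP duality for max-flow: the dual of the max-flow program is the fractional min-cut program, whose variables are node potentials, and a core allocation can be read off by charging each edge its (reported) capacity times the amount by which it is ``cut''. Because the cut-value of an edge in this formulation depends only on its two endpoints, parallel edges are automatically treated identically, which is exactly what SP and MP demand.

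Concretely, for a reported profile $\hat c$ I would range over node potentials $\pi\in\mathbb{R}^V$ with $\pi_s=1$ and $\pi_t=0$, and minimize $\sum_{e=(u,v)\in E}\hat c_e(\pi_u-\pi_v)^+$. By max-flow/min-cut this minimum equals $\hat v(N)$; I would then let $\pi^*$ be the optimal potential of minimum Euclidean norm, which is unique because it minimizes a strictly convex function over the (polyhedral) optimal face. The mechanism is
\[ \phi_e(\hat c)=\hat c_e\,(\pi^*_u-\pi^*_v)^+,\qquad e=(u,v). \]

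Next I would verify that $\phi(\hat c)\in C(N,\hat v)$. Efficiency $\sum_e\phi_e(\hat c)=\hat v(N)$ is immediate from the choice of $\pi^*$, and $\phi_e\ge 0$ is built into the positive-part. For the coalition inequalities, given $S\subseteq N$ the restriction of $\pi^*$ is still feasible for the potential program of the subgraph on $S$ (it keeps $\pi_s=1,\pi_t=0$), so $\hat v(S)=\min_{\pi}\sum_{e\in S}\hat c_e(\pi_u-\pi_v)^+\le\sum_{e\in S}\hat c_e(\pi^*_u-\pi^*_v)^+=\sum_{e\in S}\phi_e(\hat c)$. Hence $\phi$ is a bona fide core-selection mechanism.

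The heart of the argument is SP and MP. When an edge $e=(u,v)$ is split into parallel edges $e_1,e_2$ with $\hat c_{e_1}+\hat c_{e_2}=\hat c_e$ (and symmetrically for a merge), both the feasible set $\{\pi:\pi_s=1,\pi_t=0\}$ and the objective $\sum_e\hat c_e(\pi_u-\pi_v)^+$ are \emph{unchanged} as functions of $\pi$, since the bundle between $u$ and $v$ contributes $(\hat c_{e_1}+\hat c_{e_2})(\pi_u-\pi_v)^+=\hat c_e(\pi_u-\pi_v)^+$. The two potential programs therefore coincide, the canonical $\pi^*$ is the same before and after, and
\[ \phi_{e_1}(\hat c')+\phi_{e_2}(\hat c')=(\hat c_{e_1}+\hat c_{e_2})(\pi^*_u-\pi^*_v)^+=\hat c_e(\pi^*_u-\pi^*_v)^+=\phi_e(\hat c), \]
giving both SP and MP with equality. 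The one point requiring genuine care is precisely this comparison: SP and MP relate two \emph{different} games with different player sets, so I must guarantee that the selection rule returns the same potential for the split and merged instances. This is what forces me to fix a deterministic, capacity-aggregation-invariant tie-break (the minimum-norm $\pi^*$) rather than an arbitrary optimal dual, and I expect it to be the main subtlety of the proof.
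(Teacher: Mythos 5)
Your proof is correct, but it takes a genuinely different route from the paper's. The paper's mechanism is purely combinatorial: it selects the minimum cut $M$ nearest to the source, pays each edge in $M$ its reported capacity and every other edge zero, cites \cite{1982kalaibalanced} for core membership, and observes that splitting or merging basic parallel edges leaves this nearest minimum cut unchanged, with the old edge in the old $M$ exactly when both new edges are in the new $M$. You instead select a canonical \emph{fractional} dual solution --- the minimum-norm optimal potential $\pi^*$ --- and pay $\hat c_e(\pi^*_u-\pi^*_v)^+$, proving core membership from scratch by restricting $\pi^*$ to the coalition's potential program; that restriction argument is sound, since the minimum of $\sum_{e\in S}\hat c_e(\pi_u-\pi_v)^+$ over $\pi_s=1,\pi_t=0$ equals $\hat v(S)$ by LP duality on the subgraph, and your min-norm tie-break is well defined because the squared norm is strictly convex on the (nonempty, closed, convex) optimal face. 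Both proofs ultimately rest on the same invariance principle --- parallel edges enter the dual only through their total capacity, so a deterministic, aggregation-invariant dual selection yields SP and MP with equality --- and you correctly identify the cross-game comparison as the point requiring a canonical selection. What your version buys: a self-contained duality proof of core-selection and a tie-breaking rule whose uniqueness is immediate, whereas the paper's ``minimum cut nearest to the source'' implicitly relies on the standard submodularity fact that the inclusion-minimal source side of a minimum cut is unique. What the paper's version buys: an elementary allocation, computable from a single max-flow computation, whose split/merge invariance can be checked directly on the cut without invoking the dual LP.
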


\begin{proof} Given a digraph and reported capacities, we choose a minimum cut  $M$ that is nearest to the source. Consider the following mechanism: The payoff allocated to any player is her capacity if her controlled edge is in $M$, and $0$ otherwise. 
It is known that this allocation is in the core \citep{1982kalaibalanced} and hence this is a core-selection mechanism. To show that this mechanism is SP and MP, it is sufficient to observe  the following two facts. (i)  Whenever an edge is  split into two parallel ones, this does not affect the minimum cut nearest to the source, and the old edge is in the old $M$ if and only if the two new edges are both in the new $M$, in which case the total allocated payoff remains unchanged. (ii) Whenever two parallel edges are merged into one, this does not affect the minimum cut nearest to the source either, and the new edge is in the new $M$ if and only if the two old edges are both in the old $M$, in which case the total allocated payoff remains unchanged too. 
\end{proof}

\vspace{24pt}

\begin{example}[Core violates CM]
Consider  a four-player game in Figure \ref{Core violates CM}. When the edge capacity vector is $(1,1,1,1)$, the core  is \(\{(a, a, 1-a, 1-a): a \in [0,1]\}\), and thus there exists one player who has a payoff smaller than 1 (for any core-selection mechanism). We assume w.l.o.g. that player 2 is such a player. 
 When player 1's capacity is 0, however, the unique core allocation is $(0,1,0,0)$, in which the profit allocated to player 2 is 1. Therefore, when player 1 increases her capacity from 0 to 1, the max-flow value increases from 1 to 2, but the payoff allocated to  player 2 decreases. This  illustrates that no core-selection mechanism  satisfies CM.\end{example}

\begin{figure}[H]
 \centering
\begin{tikzpicture}[scale=5]
    \fill (1,0) circle (0.5pt) node[below] {$s$};
    \fill (2,0) circle (0.5pt) node[below] {$A$};
    \fill (3,0) circle (0.5pt) node[below] {$t$};

  \draw[red, line width=2pt, postaction={decorate,decoration={markings,mark=at position 0.5 with {\arrow{>[scale=1]}}}}] (1,0) to[out=60, in=120] node[above] {\textcolor{black}{$c_1=0 \rightarrow 1$}} (2,0);
 \draw[blue, line width=0.8pt, postaction={decorate,decoration={markings,mark=at position 0.5 with {\arrow{>[scale=1]}}}}](1,0) to[out=-60, in=-120] node[below] {\textcolor{black}{$c_2=1$}} (2,0);
  \draw[blue, line width=0.8pt, postaction={decorate,decoration={markings,mark=at position 0.5 with {\arrow{>[scale=1]}}}}] (2,0) to[out=60, in=120] node[above] {\textcolor{black}{$c_3=1$}} (3,0);
   \draw[blue, line width=0.8pt, postaction={decorate,decoration={markings,mark=at position 0.5 with {\arrow{>[scale=1]}}}}] (2,0) to[out=-60, in=-120] node[below] {\textcolor{black}{$c_4=1$}} (3,0);
\end{tikzpicture}

\caption{Core violates CM}
\label{Core violates CM}
\end{figure}
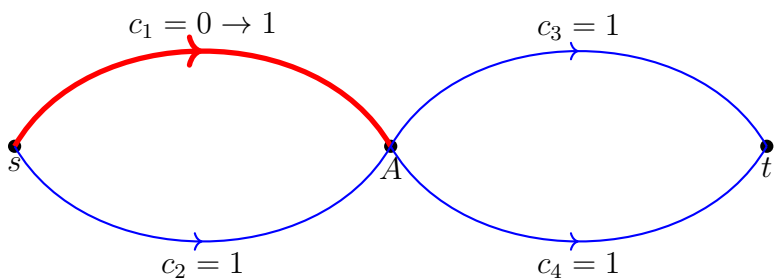

\end{document}